\newcommand\bZ{\mathbb Z}
\newcommand\bR{\mathbb R}
\newcommand\bC{\mathbb C}
\newcommand\bP{\mathbb P}
\newcommand\beq{\begin{equation}}
\newcommand\eeq{\end{equation}}
\begin{document}
\title{Multiscale limit for finite-gap Sine-Gordon Solutions and Calculation of Topological 
Charge using Theta-functional Formulae}
\author{P. G. Grinevich}
\address{L.D. Landau Institute for Theoretical Physics, Russian Academy of Sciences}
\email{pgg@landau.ac.ru}
\author{K. V. Kaipa}
\address{Department of Mathematics, University of Maryland, College Park}
\email{kaipa@math.umd.edu}
\thanks{
This work has been partially supported by the Program of Russian
Academy of Sciences ``Fundamental problems of nonlinear dynamics'', 
RFBR grant 09-01-92439-KE-a and by the grant NSh.1824.2008.1 --
Leading Scientific Schools of the Russian Federation President's Council for grants}
\begin{abstract}
In our paper we introduce the so-called multiscale limit for spectral curves, 
associated with real finite-gap Sine-Gordon solutions. This technique allows to 
solve the old problem of calculating the density of topological charge for real 
finite-gap Sine-Gordon solutions directly from the $\theta$-functional formulas.
\end{abstract}
\maketitle
\newtheorem{lemma}{Lemma}
\newtheorem{theorem}{Theorem}
\newtheorem{definition}{Definition}
\newcounter{one}
\setcounter{one}{1}
\newcounter{six}
\setcounter{six}{6}
\newcounter{nineteen}
\setcounter{nineteen}{19}
\renewcommand{\Re}{\mathop{\mathrm{Re}}}
\renewcommand{\Im}{\mathop{\mathrm{Im}}}
\numberwithin{equation}{section}

\section{Introduction}
One of the most important integrable equations of mathematical physics
is the sine-Gordon equation (SG) 
\begin{equation}
u_{tt}-u_{xx}+\sin u=0.
\end{equation}
It appeared in the \Roman{nineteen} century in the theory of constant negative
curvature surfaces in $\bR^3$ in the light-cone representation
$$ 
u_{\xi\eta}=4\sin u. 
$$
Already in the \Roman{nineteen} century it was shown, that the SG equation 
has some remarkable properties, in particular, sufficiently many exact
solutions were constructed using the so-called B\"{a}cklund
transformations. This equation naturally arose in many areas of
mathematics and physics.

The modern approach to the sine-Gordon equation is based on
representation in the form of compatibility conditions for a pair of
auxiliary linear problems. It was first found in \cite{AKNS} by
Ablowitz, Kaup, Newell and Segur. The $\theta$-functional formulas 
for finite-gap SG solutions were obtained by Its, Kozel and Kotlyarov
in \cite{IK}, \cite{KK}, the reality constraints on the spectral
curves were also found in these papers. The reality constraints on the
divisor turned out to be rather nontrivial, and they were found by 
Cherednik \cite{Cher}. In \cite{Cher} it was also shown, that all real
finite-gap SG solutions are automatically non-singular. Divisors, 
generating real solutions are called {\bf admissible}. For a fixed
spectral curve the number of connected components for the variety of
admissible divisors may contain more that one connected components,
the number of these components was also calculated in \cite{Cher}. A
characterization of these components in terms of the Jacobi variety
of the spectral curve was obtained by Dubrovin and Natanzon \cite{DN}, and Ercolani and Forest 
\cite{EF}.
 
It is natural to call SG solutions {\bf periodic}  in $x$
with period $T$ if the quantity $e^{iu}$ is periodic 
with period $T$ in the real variable $x \in \bR$. This definition does not assume 
that the function $u(x,t)$ is periodic in $x$, but
$$
u(x+T,t) = u(x,t)+2\pi n, \ \ n \in \bZ.
$$
The quantity $n$ is called the {\bf topological charge} of $u$. 

For generic real finite-gap SG solutions the function  $e^{iu(x,t)}$ 
is quasiperiodic in $x$, $t$, but $u(x,t)$ is not  quasiperiodic.
It is easy to show, that for real finite-gap SG solutions
constructed by non-degenerate spectral curves the 
{\bf density of topological charge}
\beq  
\label{1.2}
\bar n =\lim\limits_{T\rightarrow\infty}\frac{u(x+T,t)-u(x,t)} {2\pi T} 
\eeq
is well-defined. This density is the most basic and useful characteristic of 
real solutions. In particular it is a conservation law for the SG
hierarchy ``surviving'' generic perturbations of the type 
$$
u_{tt}-u_{xx}+\sin u=\varepsilon F(\sin u, \cos u, u_x, u_t,
u_{xx},u_{xt},u_{tt}, \ldots)
$$

The problem of calculating the topological charge in terms of the
finite-gap spectral data was first raised and partially solved 
in 1982 by Dubrovin and Novikov \cite{DubNov}. 
In \cite{Nov} Novikov pointed out, that the formulas of 
\cite{DubNov} are meaningful only if the spectral 
curve is assumed to be sufficiently close to a degenerate one. A complete solution
was obtained by Grinevich and Novikov only in 2001  \cite{GN}, \cite{GN2}.  \\

The calculation of topological charge density in  \cite{DubNov},
\cite{Nov},  \cite{GN}, \cite{GN2} was based on the so-called 
algebro-topological approach, and explicit $\theta$-functional 
formulas for $u(x,t)$ were not essentially used.  Therefore
a paradoxical situation arose: we have explicit $\theta$-functional formulae 
for the solution, but they do not help to answer the most basic questions 
about the solution. As pointed out by S.P. Novikov, if the $\theta$-functional form of 
solutions is to be considered as an effective one, then it should 
be possible to calculate $\bar n$ directly from the
$\theta$-functional formulae. However, no way to achieve this goal was
found until now. 

In the present article, we suggest a new approach for studying the topological 
characteristics of real finite-gap SG solutions. It is based on the so-called 
{\bf multiscale} or {\bf elliptic  limit}. In  this limit the spectral 
curve $\Gamma$ is deformed to a singular nodal curve having elliptic 
curves as components (and possibly an additional hyperelliptic component).
In particular, we demonstrate, that this approach allows us to extract 
the density of topological charge directly from the $\theta$-functional formulae.

The authors would like to express their gratitude to Professor Novikov 
for attracting their attention to this problem and stimulating discussions.

\section{Complex Finite-gap Sine-Gordon solutions}

We recall the construction of finite-gap SG solutions (\cite{DN},
\cite{GN}, \cite{KK}). The spectral data is a pair $(\Gamma, D)$, 
where $\Gamma$ is a nonsingular hyperelliptic curve with branching points
$\{0, \infty\}$ and $\{E_1,E_2,\cdots,E_{2g}\}$.
\beq
\label{2.1}
\Gamma: \mu^2 = \lambda \prod_{i=1}^{2g} (\lambda - E_i)
\quad 
\mbox{with} \;\;  E_1, \cdots, E_{2g} \;\; \mbox{distinct nonzero complex 
numbers}
\eeq
\noindent and $D=\gamma_1+\gamma_2+\cdots+\gamma_g$ is a divisor of points:
with $\gamma_1, \cdots, \gamma_g \in \Gamma \backslash \{0,\infty \}$. 
For sufficiently small $(x,t)$ (made precise below), the finite-gap solution $u(x,t)$ 
obtained from this spectral data
is non-singular i.e., takes values in the finite complex plane. In order 
to present 
the formula for $u(x,t)$ we need to introduce some notation.
We choose a symplectic basis of cycles $\{a_1, b_1, \cdots, a_g, b_g\}$ in $H_1(\Gamma, \bZ)$ 
and also a basis for the 
$g$-dimensional space of holomorphic differentials on $\Gamma$: 
$\vec{\omega} = (\omega_1, \cdots,\omega_g)$
normalized such that $\int_{a_j} \omega_i =\delta_{i j}$. The Riemann matrix 
of $\Gamma$ is the matrix defined by $B_{ij}  = 
\int_{b_j} \omega_i$. The matrix $B$ is symmetric and its imaginary part is positive definite. 
The Riemann theta function associated with 
$B$ is the entire function of $g$ complex variables $z=(z_1, \cdots, z_g)$ defined by
\beq 
\label{2.2}
\theta(z|B) = \sum_{ n \in \bZ^g} \exp( \pi i \,n^t B n ) \, \exp(2 \pi i \,n^t z) 
\eeq
\noindent It satisfies the transformation rule
\beq
\label{2.3}
\theta(z+N+BM) = \theta(z) \, \exp \left( - \pi i \, \left[ 2 M^t z + M^t B M \right]  \right) 
\qquad   {\rm where} \quad  N, M  \in \bZ^g
\eeq
The complex torus $J(\Gamma)=\bC^g/\{\bZ^g + B \bZ^g\}$ is the Jacobian variety 
of $\Gamma$, and the Abel map $A:\Gamma \to J(\Gamma)$ 
is defined by $P \mapsto \int_{\infty}^P \vec{\omega}$. 
Let $K \in \bC^g$ be the associated vector of Riemann constants. 
Let $\omega_{\infty}$ and $\omega_0$ be abelian 
differentials of the second kind having  double poles at $P=\infty$ and $P=0$ 
respectively with the principal parts being $ - \lambda \, d (1/\sqrt{\lambda})$ 
and $ -1/\lambda \, d \sqrt{\lambda}$ respectively and normalized to have zero $a$-periods. 
(where by $\sqrt \lambda$ and  $1/ \sqrt \lambda$ we mean  local coordinates $k_0$ 
and $ k_{\infty}$ at $P=0$ and $P=\infty$ respectively with 
$k_0^2=\lambda$ and  $k_{\infty}^2=1/\lambda$). We define two vectors 
$U, V \in \bC^g$ by 
\beq 
\label{2.4}
U_j = \frac{1}{2 \pi} \int_{b_j} \omega_{\infty}, \qquad  V_j = \frac{1}{2 \pi } \int_{b_j} \omega_{0} 
\eeq
The $\theta$-functional formula for $u(x,t)$ is given by (\cite{DN}, \cite{KK}):
\beq
 \label{2.5}
\begin{split} 
e^{i u(x,t)} = C_1 \frac{ \theta( A(0)+ z(x,t))\, \theta( -A(0)+ z(x,t)) }{ \theta^2(z(x,t)) } \\
 z(x,t) =  -A(D)+  x (V-U)/4 -  t (U+V)/4 - K
\end{split}
\eeq 
where $C_1^2 =  \exp( \pi  i \, \epsilon^t B \epsilon /2 )$. We remark that $A(0)$ is a half-period 
(because $A(P)-A(Q)$ is always a half-period whenever $P, Q$ are branch points of a $2$-sheeted 
cover $\Gamma \to \bP^1$).  Let $\Theta=\{z \in J(\Gamma)\,|\, 
\theta(z)=0\}$ (which is well-defined by virtue of (\ref{2.3})) denote the $\theta$-divisor. Let 
$\mathcal U \subset \bR^2$ be a neighborhood of $(0,0)$ in the $(x,t)$ plane for which the 
image in $J(\Gamma)$ of $z(x,t)$ is disjoint from the divisor $\Theta \cup (\Theta+A(0))$. 
If $\mathcal U$ is simply-connected, then (\ref{2.5}) defines the function $u(x,t)$ uniquely 
once $u(0,0)$ is specified. In the real case  $\mathcal U$ may be chosen to coincide with
the whole  $\bR^2$ (see below). 

\section{ Real Tori and Real Solutions }
\label{real_tori}
In order to obtain real solutions  $u(x,t)$, we must impose some conditions on the 
spectral data. The reality condition on $\Gamma$ (found in \cite{IK}, \cite{KK}) is:
\beq
\label{3.1}
\{\overline{E_1}, \cdots, \overline{E_{2g}}\} =  \{ E_1, \cdots,  E_{2g}\} \quad \mbox{and} 
\quad  E_i \in \bR \Rightarrow E_i < 0
\eeq
Let $2m$ denote the number of real $E_i$. We order the real branch points as $0>E_1>E_2> \cdots>E_{2m}$ 
and also assume 
$E_{2i} = \overline{E_{2i-1}}$ for $m+1 \leq i \leq g$. The reality condition on the divisor found by 
Cherednik \cite{Cher} is
\beq
\label{3.2} 
D+ \tau D -0 -\infty  \sim  \mathcal{K} 
\eeq 
\noindent where $\tau$ is the anti-holomorphic involution of $\Gamma$ given by 
$\tau:(\lambda, \mu) \mapsto (\bar{\lambda}, \bar{\mu})$ and 
$\mathcal{K}$ is the canonical class, and $\sim$ denotes the relation of linear 
equivalence of divisors. Such a divisor will be called  {\bf admissible}. 
The Abel map $A:\Gamma \to J(\Gamma)$ extends to divisors by 
$A(\sum_i P_i - \sum_j Q_j) = \sum_i A(P_i) - \sum_j A(Q_j)$.
It was shown in the works \cite{Cher}, \cite{DN}, \cite{EF} that the image in 
$J(\Gamma)$ of the admissible divisors under the Abel map, consists of 
$2^m$ components each of which is a real $g$-dimensional torus. We use here a 
concrete description of these tori, similar to the one in \cite{DN}.

We choose a special basis of cycles $\{a_1, b_1, \cdots, a_g, b_g\}$  suggested in \cite{DubNov}, 
and depicted in Fig.~2 (where the parameter $k$ must be taken to 
be $1$). The picture shows the $\lambda$-plane with the thick-dashed lines 
representing the system of cuts, and the transition from solid to dashed 
line in the cycles $\{b_{m+1}, \cdots, b_{g}\}$ indicating change of sheet across a cut. 
The action of $\tau$ on $H_1(\Gamma, \bZ)$ is given by 
\begin{align*}
\tau a_i &= - a_i \qquad  1 \leq i \leq g \\
\tau b_i &= b_i \quad \qquad 1 \leq i \leq m \\
\tau b_i &=b_i + a_i \quad  m+1 \leq i \leq g 
\end{align*}

\noindent This immediately implies that the effect of $\tau$ on the holomorphic differentials 
is given by $-\omega_j =  \overline{ \tau^* \omega_j}$.
Therefore
\begin{gather}
  \label{3.3}
A(\tau P) = -\overline{A(P)}  \\
  \label{3.4}
\quad -\bar{B}=B +  \bigl( \begin{smallmatrix}   0 & 0\\   0 & I_{g-m} \end{smallmatrix} \bigr)
 \quad \mathrm{or}  \quad
\Re(B)= -1/2 \,  \bigl( \begin{smallmatrix}   0 & 0\\   0 & I_{g-m} \end{smallmatrix} \bigr) 
\end{gather}
where $I_{g-m}$ is the identity matrix of size $g-m$. (From here on, all vectors $v$ written 
as $(v_1, v_2)^t$ are understood to be split into blocks of length $m$ and $g-m$).

In order to describe the real tori, we need to calculate $A(\mathcal K)$ and $A(0)$. 
The divisor $ ( d \lambda/ \mu)  = (2g-2)\cdot \infty$ is canonical hence we obtain 
$A(\mathcal K)=0$. Let $2 \pi i \, \alpha$ and $2 \pi i \, \beta$ denote the vectors of $a$ and  $b$-periods 
of the differential $\omega =\tfrac{1}{2}\,d \log \lambda$. From the bilinear 
relations of Riemann applied to the pair of differentials $\omega, \omega_i$ for 
$1 \leq i \leq g$, we immediately obtain $ (\beta - B \alpha)/2 = A(0)$.
The numbers  $\alpha_j$ and $\beta_j$ are the winding numbers of the projection 
$\lambda(a_j)$ and $\lambda(b_j)$ around $\lambda = 0$, and are
read off from Fig.~2 to be $\alpha =  (1,0)^t$ and $\beta = (0,1)^t$. 
Thus we obtain:
\begin{gather}
  \label{3.5}
A(\mathcal K)=0 \\
  \label{3.6}
A(0) = \frac{1}{2} \, \epsilon' + \frac{1}{2} \,B \epsilon, \quad 
\epsilon = { 1 \choose 0} \quad \epsilon' = { 0 \choose 1}
\end{gather}
Every $z \in J(\Gamma)$ can be written as $z = x + B\,y$ for unique vectors 
$x, y \in \bR^g/\bZ^g$. We will denote the subset of elements of 
the form $\{x + B\, 0\}$ as $T^g$. Clearly $T^g \subset J(\Gamma)$ is isomorphic to 
$\bR^g/\bZ^g$. Applying the Abel map to the Cherednik 
reality condition  $D+ \tau D -0 -\infty  \sim \mathcal{K}$ and using  
(\ref{3.3}), (\ref{3.5}) and (\ref{3.6}) we obtain:
 \beq 
\label{3.7}
\begin{split}  A(D) &=  x + B \, { s/4 \choose  1/2}, \quad
  s={ {s_1\atop s_2} \choose  {\cdot \atop \,s_m}} \\
{\mbox{ for some}} \quad x \in &T^g \quad \mbox{and} \quad  s_1,\ldots,s_m =\pm1 \end{split} 
\eeq
For each of the $2^m$ collection of symbols  $s_1,\ldots,s_m=\pm 1$, let $T_s$ denote the subset of $J(\Gamma)$ 
consisting of elements of the form  $-A(D)-K$ where $A(D)$ is as in (\ref{3.7}) and $K$ is the vector of 
Riemann constants.
\beq 
\label{3.8}
T_s = -T^g  - B \, { s/4 \choose  1/2} -K  
\eeq
The symbol $s$ will be called the {\bf topological type} of the admissible divisor $D$.  
In the sequel, the terminology real tori will refer to the $T_s$. 
We collect the essential facts (from \cite{Cher}) about the real tori in the following lemma. 
\begin{lemma}
i) The real tori $T_s$ do not intersect the divisor $\Theta \cup (\Theta+A(0))$\\ 
ii) Let $D$ be an admissible divisor of topological type $s$, and let  
$ z(x,t) =  -A(D)+  x (V-U)/4 -  t (U+V)/4 - K$. Then $z(x,t) \in 
T_s$ for all $x,t$. \\
iii) The real solutions $u(x,t)$ are non-singular for all $x,t$
\end{lemma}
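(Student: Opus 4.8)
The plan is to deduce (iii) for free from (i) and (ii), to obtain (ii) from the reality of the frequency vectors $U,V$, and to treat (i) as the substantive statement. Granting (i) and (ii): by (ii) the argument $z(x,t)$ lies in $T_s$ for every real $x,t$, and by (i) the set $T_s$ misses $\Theta$, so the denominator $\theta^2(z(x,t))$ in (\ref{2.5}) never vanishes. Since $A(0)$ is a half-period, $\theta(A(0)+z)=0$ is equivalent to $z\in\Theta-A(0)=\Theta+A(0)$, and likewise for $\theta(-A(0)+z)$; both are excluded by (i), so the two numerator factors are finite and nonzero. Hence $e^{iu(x,t)}$ is a finite nonzero number for all $(x,t)$, which is precisely the non-singularity asserted in (iii) and the reason $\mathcal U$ may be enlarged to all of $\bR^2$.

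For (ii): by the definition (\ref{3.8}) the base point $-A(D)-K$ of the trajectory already lies in $T_s$, and $T_s$ is a coset of the real subtorus $T^g=\bR^g/\bZ^g$, whose tangent directions are the real vectors $\bR^g\subset\bC^g$. Thus it suffices to show $U,V\in\bR^g$, for then the velocities $(V-U)/4$ and $-(U+V)/4$ are real and the flow stays in $T_s$. I would get this from the reality of the second-kind differentials: because $\omega_\infty,\omega_0$ have their double poles at the $\tau$-fixed branch points $P_\infty,P_0$ with real principal parts, one obtains $\tau^*\omega_\infty=\overline{\omega_\infty}$ and $\tau^*\omega_0=\overline{\omega_0}$ (the sign is $+$, opposite to the holomorphic relation $\tau^*\omega_j=-\overline{\omega_j}$, precisely because the principal parts are real). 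Then $\overline{U_j}=\tfrac{1}{2\pi}\int_{\tau b_j}\omega_\infty$, which equals $U_j$ using $\tau b_j=b_j$ for $j\le m$, and $\tau b_j=b_j+a_j$ together with the vanishing of the $a$-periods for $j>m$; the same computation gives $\overline{V_j}=V_j$.

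The real content is (i), which I would argue algebro-geometrically via Riemann's vanishing theorem, stated in the normalization of the excerpt as: $\theta(e)=0$ iff $e=-A(\zeta)-K$ for some effective $\zeta\ge0$ of degree $g-1$. Writing a point of $T_s$ as $z=-A(D)-K$ with $D$ admissible of type $s$, membership $z\in\Theta$ becomes $A(D)=A(\zeta)$, i.e. $D\sim P_\infty+\zeta$; and, since $A(0)=A(P_0)$ is the Abel image of the branch point $P_0$, membership $z\in\Theta+A(0)$ becomes $A(\zeta)=A(D)-A(P_0)$, i.e. $D\sim P_0+\zeta$. Thus (i) is equivalent to the assertion that an admissible divisor of topological type $s$ is never linearly equivalent to a branch point ($P_0$ or $P_\infty$) plus an effective divisor — in other words, that such $D$ are non-special and contain neither $P_0$ nor $P_\infty$ in their linear system.

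The hard part will be exactly this non-speciality, which is the geometric heart of Cherednik's result and cannot be bypassed analytically. I would prove it from the constraint (\ref{3.2}): applying $\tau$ gives $D+\tau D\sim P_0+(2g-1)P_\infty$ (using $\mathcal K\sim(2g-2)P_\infty$), and a $\tau$-invariance analysis shows that the points of an admissible divisor are confined one to each real oval of $(\Gamma,\tau)$, the non-real points occurring in conjugate pairs; a divisor so distributed supports no nonconstant function with poles bounded by it, hence is non-special and cannot absorb an extra branch point. It is instructive to see why a purely analytic route stalls: using (\ref{3.4}) in the form $-\overline{B}=B+J$ with $J=\bigl(\begin{smallmatrix}0&0\\0&I_{g-m}\end{smallmatrix}\bigr)$ one gets the conjugation identity
\[
\overline{\theta(z)}=\theta\!\left(-\bar z+\delta\right),\qquad \delta=\tfrac12\binom{0}{\mathbf 1_{g-m}},
\]
and after completing the square $\theta|_{T_s}$ reduces to a Gaussian series in which the last $g-m$ coordinates carry half-integer characteristics — so the reflection $n\mapsto 2P-n$ pairs terms into positive ones — while the first $m$ coordinates carry the quarter-period characteristic $s/4$, where $2P\notin\bZ^g$ and both termwise positivity and even reality of the series break down. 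Since these are exactly the directions governing the topological charge, the reality symmetry alone is insufficient and the oval-confinement input is indispensable.
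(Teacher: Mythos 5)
Your parts (ii) and (iii) follow the paper's proof essentially verbatim: reality of $U,V$ comes from $\overline{\tau^*\omega_0}=\omega_0$, $\overline{\tau^*\omega_\infty}=\omega_\infty$, the action of $\tau$ on the $b$-cycles, and the vanishing of the $a$-periods; then (iii) is a formal consequence of (i), (ii) and (\ref{2.5}). The first half of your argument for (i) is also the paper's: by the Riemann vanishing theorem and $\Theta=-\Theta$, a point $z=-A(D)-K$ of $T_s$ lies in $\Theta\cup(\Theta+A(0))$ if and only if $D\sim\gamma_g+\zeta$ with $\gamma_g\in\{0,\infty\}$ and $\zeta\geq 0$ of degree $g-1$, so (i) amounts to showing an admissible divisor admits no such equivalence.

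The gap is in how you finish (i). The claim that admissible divisors have ``one point on each real oval of $(\Gamma,\tau)$, the non-real points occurring in conjugate pairs'' is false for sine-Gordon; it is the KdV/defocusing-NLS picture, and the whole point of Cherednik's reality theory is that it fails here. Any divisor distributed as you describe satisfies $\tau D=D$, so the Cherednik condition (\ref{3.2}) would force $2A(D)=A(0)$, which has only finitely many solutions in $J(\Gamma)$ --- contradicting the fact, recorded in (\ref{3.7}), that admissible classes sweep out $g$-dimensional real tori. Concretely, for $g=m=1$ the admissible points are those with $A(\gamma)=x+Bs/4$, $x\in\bR/\bZ$, $s=\pm1$, while the two real ovals are the loci $A(\gamma)=p+Bq$ with $p\in\{0,1/2\}$ ($B$ being purely imaginary when $m=g$): the admissible circles meet the ovals in finitely many points, and a generic admissible divisor is not $\tau$-invariant at all. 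Since your route to non-speciality rests entirely on this oval confinement, it collapses. The paper finishes instead with a short argument needing no structure theorem: if an admissible $D$ were $\sim\gamma_g+D_1$ with $\gamma_g\in\{0,\infty\}$, then, the Cherednik condition being invariant under linear equivalence, $D_1+\tau D_1\pm(0-\infty)$ would be the divisor of a meromorphic differential $\alpha$ as in (\ref{3.9}). If $\alpha$ were holomorphic, its vanishing order at $\infty$ would be odd (the multiplicity of the $\tau$-fixed point $\infty$ in $D_1+\tau D_1$ is even), which is impossible since every holomorphic differential $\sum_j c_j\lambda^j\,d\lambda/\mu$ vanishes to even order at the branch point $\infty$; if $\alpha$ were not holomorphic, it would have exactly one simple pole, contradicting the residue theorem. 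Replacing your oval-confinement step by this parity-plus-residue argument closes the gap.
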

\begin{proof}
{\em i}) We use the fact (\cite{FK} pp 310) that $\Theta$ consists of elements of the 
form $A(\gamma_1+\cdots+\gamma_{g-1})+K$. From this fact (together with 
$\Theta=-\Theta$) it follows that  for a divisor $D$, the condition 
$-A(D)-K \in \Theta \cup (\Theta+A(0))$ is equivalent to 
$D \sim \gamma_1+\cdots+\gamma_g$ for some $\gamma_i$ with $\gamma_g \in \{0,\infty\} $. 
Suppose such a divisor $D$ is also admissible, i.e. $D + \tau D - 0 - \infty \sim \mathcal K$. 
Since the Cherednik condition depends only on the divisor class of $D$, 
we may assume $D=\gamma_1+\cdots+\gamma_{g-1}+\gamma_g$ with $\gamma_g \in \{0,\infty\}$. 
Let $D_1 =  \gamma_1+\gamma_2+\cdots+\gamma_{g-1}$.  The admissibility condition can be rewritten as
\beq  
\label{3.9}
D_1 + \tau D_1 \pm (0-\infty) = (\alpha) \quad \mbox{for some abelian differential} \; \alpha 
\eeq 
If the differential $\alpha$ is holomorphic then (\ref{3.9}) implies that it 
has an odd number of zeros at $\infty$. However every holomorphic differential on $\Gamma$ 
has an even number of zeroes at $\infty$. Therefore $\alpha$ 
is not holomorphic and (\ref{3.9}) now implies that $\alpha$ has a single pole, contradicting 
the residue theorem.  Thus  $T_s$ does not intersect $ \Theta \cup (\Theta + A(0))$.

{\em ii}) the second assertion follows from the first as soon as we show that the vectors $U, V$ 
defined by (\ref{2.4}) are purely real. It is easy to see from the definition of 
$\omega_0, \omega_\infty$ 
that they are fixed by the involution $\omega \mapsto \overline{\tau^* \omega }$.
Equation (\ref{2.4}) together with the fact that $\tau b_i = b_i \; \rm{mod} \; a_i$, now 
implies that  $U$ and $V$ are purely real.

 {\em iii}) The formula (\ref{2.5}) for  $u(x,t)$ shows that it is nonsingular as long 
as $ z(x,t) \notin  \Theta \cup (\Theta+A(0))$. Thus the first two
assertions imply the third one.
\end{proof}

\noindent To conclude this section, we calculate the vector of Riemann constants $K$ for the 
chosen base point of Abel map ($P=\infty$), and the choice
of basic cycles. A formula for $K$ is well-known (\cite{FK}, pp 325) when the 
base point of Abel map is a branch point of a $2$-sheeted cover
$\lambda:\Gamma \to \bP^1$. It equals $\sum_i A(P_i)$ where the sum is over 
those branch points $P_i$ for which $A(P_i)$ is an
odd half-period (i.e. of the form $(n+Bm)/2$ with the scalar product $(n, m)$ 
being an odd integer). In the present case, we obtain:
\beq \label{3.10} K = A(E_1)+A(E_3)+ \cdots +A(E_{2g-1}) \eeq
\noindent Let $2 \pi i \, \alpha'$ and $2 \pi i \, \beta'$ be the vectors of $a$ and $ b$-periods 
of the differential $ \tilde \omega = \tfrac{1}{2}\, d \log
\prod_{i=1}^g(\lambda-E_{2i-1})$, which is a differential of the third kind 
having residues $+1$ at $E_1, E_3, \cdots, E_{2g-1}$ and residue $-g$ at
$\infty$. From the bilinear relations of Riemann applied to the pair of 
differentials $\omega_i, \tilde \omega$ for $1
\leq i \leq g$, we obtain
\[ (\beta' - B \alpha')/2 =  A(E_1)+A(E_3)+ \cdots +A(E_{2g-1})  = K \]
The numbers  $\alpha_i'$ and  $ \beta_i'$ being  $ \tfrac{1}{2 \pi  i}\int_{a_i} \tilde \omega$ and  
$\tfrac{1}{2 \pi  i} \int_{b_i} \tilde \omega$ respectively are equal to the
sum of the winding numbers about $E_1, E_3, \cdots, E_{2g-1}$ of the projected 
curves $\lambda(a_i)$ and $\lambda(b_i)$ respectively. The winding
numbers can be read off from Fig.~2. We obtain:
\beq 
\label{3.11}
K= \frac{1}{2} \,{ 1 \choose \nu_2} + \frac{1}{2} \, B \,{ \nu_1 \choose 1 } 
\quad \mathrm{  where } \; {\nu_1 \choose \nu_2}=
{ {1\atop 2} \choose  {\cdot \atop \,g}}
\eeq

\section{Basic charges and Topological Charge in two special cases}
\label{limiting_charges}
In order to calculate the topological charge density $\bar n$ defined in (\ref{1.2}) 
for the finite-gap solution $u(x,t)$ (\ref{2.5}), following \cite{GN} we introduce 
some integer quantities $(n_1, n_2, \cdots, n_g)$ associated with each real torus $T_s$. 
We call the quantities 
$n_j$ as {\bf basic charges}. Let $e_j$ denote the $j$-th standard basis vector of 
$\bR^g$ and let $\{ z_j(T) =  -A(D) - T e_j - K \, | \, 0 \leq T \leq 1\}$ denote 
the $j$-th basic cycle of the real torus $T_s$. We define the basic charges $n_j$ by:
\beq
\label{4.1}
 n_j = \frac{1}{2 \pi i} \int_{T=0}^{T=1} d \log e^{i u_j(T)}, \; \mbox{where}\quad   e^{i u_j(T)} = C_1 \frac{ \theta( A(0)+ z_j(T))\, \theta( -A(0)+ 
z_j(T)) }{ \theta^2(z_j(T)) }
\eeq
The density of topological charge $\bar n$ for the solution real finite-gap solution $u(x,t)$ 
depends only on its topological type, and is related to 
the basic charges by (see \cite{GN2}):
\beq 
\label{4.2}
\bar{n}= \sum_{j=1}^g (U_j-V_j) \, n_j /4 
\eeq
It will be convenient to replace the quantity $\epsilon=(1,0)^t$ in the formula (\ref{3.6})  
for $A(0)$ by the quantity $\tilde \epsilon$ defined by 
$\tilde \epsilon_j = (-1)^j s_j$ for $j \leq m$ and $\tilde \epsilon_j = 0$ for $j > m$. 
Using this new expression for $A(0)$ in equation (\ref{4.1}) and 
using the $\theta$-transformation rule (\ref{2.3}) we obtain the following formula for $n_j$:
\beq 
\label{4.3}
n_j = -\tilde{\epsilon}_j +  \frac{2}{2 \pi i} \int_{T=0}^{T=1} 
d \log \left(  \frac{\theta( B \tilde{\epsilon}/2 -T e_j -K - B \, { s/4 \choose 1/2} 
)}{ \theta(  -T e_j -K - B \, { s/4 \choose  1/2} ) }  \right) 
\eeq
We calculate the basic charges $n_j$ for two special cases of spectral curves 
$\Gamma$. In the first case $m=0$, i.e. none of the $E_i$ are real.
In this case all components of the quantity $\tilde \epsilon$ defined above are zero. 
Thus both terms in (\ref{4.3}) vanish and we obtain $n_j=0$. 
In the second case we consider $g=m=1$, i.e. elliptic curves with all branching points real. 
We denote the Riemann matrix $B$ by $\tau$. For this case, (\ref{4.3}) can be rewritten as:
\beq
\label{4.4}
n_1 = s_1 +  \frac{2}{2 \pi i} \int_{T=0}^{T=1} d \log 
\left(  \frac{\theta( -T  - (1+\tau)/2  - s_1 \tau /4 -s_1 \tau/2)}
{ \theta(  -T -(1+\tau)/2  -  s_1\tau/4) }  \right) 
\eeq
The integral term is equal to $2 s_1$ times the number of zeroes of $\theta(z|\tau)$ in the region  
$R  = \{ z \in \bC \, |\,  -\tau/4 \leq \Im(z) \leq 
\tau/4, 
-1 \leq \Re(z) \leq 0 \}$. Since the elliptic theta function $\theta(z|\tau)$ vanishes 
if and only if $z \sim (1+\tau)/2$ and the region
$R$ does not contain any such $z$, it follows that the integral term in (\ref{4.4}) 
is zero and we obtain $n_1 = s_1$. For use later in SectionÅ~\ref{calc_top_charge} , we will 
calculate a related quantity $\tilde n_1$ given by:
\beq  
\label{4.5}
\tilde n_1 = -s_1 +  \frac{2}{2 \pi i} \int_{T=0}^{T=1} d \log \left(  \frac{\theta( -T  - 1/2  - s_1 \tau /4 + s_1 \tau/2)}
{ \theta(  -T -1/2  -  s_1\tau/4) }  \right) 
\eeq
In this case the integral term is equal to $-2 s_1$ times the number of zeroes  $\theta(z|\tau)$ in the region  $R$. Hence the integral 
term drops out and we obtain $\tilde n_1 = -s_1$.

\section{Multiscale Limit}
In this section we construct a deformation of the spectral curve $\Gamma$ (\ref{2.1}) 
satisfying the reality condition (\ref{3.1}). For each $k \in [1,\infty)$, 
let $\Gamma(k)$ be the real hyperelliptic curve:

\beq  \label{5.1} \Gamma(k): \mu^2 = \lambda \prod_{i=1}^m ( \lambda - k^{i-1} E_{2i-1})  
( \lambda - k^{i-1} E_{2i}) \, \prod_{i=2m+1}^{2g}  ( \lambda - k^m
E_i)\eeq
We note that the original spectral curve $\Gamma = \Gamma(1)$. The basic cycles and the system of cuts on 
$\Gamma(k)$ are shown in Fig.~2. If
$\Pi_j(k)$ for $m+1 \leq j \leq g$ denotes the cut on the
$\lambda$-plane joining the complex conjugate branch points $k^{m} E_{2j-1}$ and  $k^{m} E_{2j}$, 
then we require $\Pi_j(k) = k^{m} \Pi_j(1)$.
The remaining cuts lie on the negative real line of the $\lambda$-plane and are obvious from Fig.~2.
We will use the notation $\Gamma(k)^+$ to denote the sheet for which $\mu >0$ when $\lambda>0$. As indicated 
in Fig.~2, the cycles $a_j(k), b_j(k)$ 
are completely determined by their projections $\lambda(a_j(k)), \lambda(b_j(k))$ to
the $\lambda$-plane. We specify the latter by:
\beq \label{5.2}
\begin{split}
\lambda_j(a_j(k)) = \lambda(a_j(1)),& \quad  \lambda_j(b_j(k)) = \lambda(b_j(1)) \\
\quad \mbox{where} \;\, \lambda_j = \begin{Bmatrix}
\lambda/k^{j-1} \qquad \mathrm{ if } \quad j \leq m \\
\lambda/k^m \; \; \qquad \mathrm{ if } \quad j > m  \end{Bmatrix}
 \mbox{ and }& \lambda: \Gamma(k) \to \bP^1 \; 
\mbox{are projections} \end{split} 
\eeq
\noindent  where we have introduced rescaled versions $\lambda_j$ of the coordinate function $\lambda$. We also 
associate with the deformation $\Gamma(k)$, $m$ elliptic curves $\mathcal C_1, \cdots, \mathcal C_m$ and a
hyperelliptic curve $\mathcal C_{m+1}$.
\beq \label{5.3}
\begin{split}  \mathcal C_j: y_j^2 = P_j(x_j) = x_j (x_j-E_{2j-1}) (x_j - E_{2j}),  \quad 1 \leq j \leq m \\
  \mathcal C_{m+1}: y_{m+1}^2 = P_{m+1}(x_{m+1}) = x_{m+1} \prod_{i=2m+1}^{2g} (x_{m+1}-E_i). \end{split} 
\eeq
\noindent The system of cuts on these curves is shown in Fig.~1. The cuts decompose
 each $\mathcal C_j$ into two sheets. By the sheet $\mathcal C_j^+$ we will mean the sheet 
for which $y_j >0$ when $x_j>0$. \\
 \begin{figure}[htb]
\input{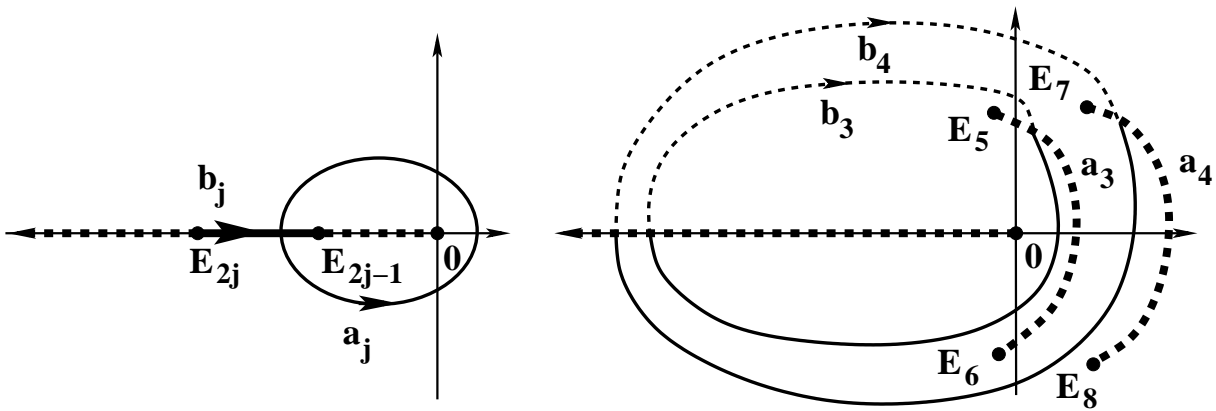}
  \caption{ Basic cycles and cuts on $\mathcal C_j$, $1 \leq j \leq m$ and $\mathcal C_{m+1}$ }
\end{figure}
 \begin{figure}[htb]
\input{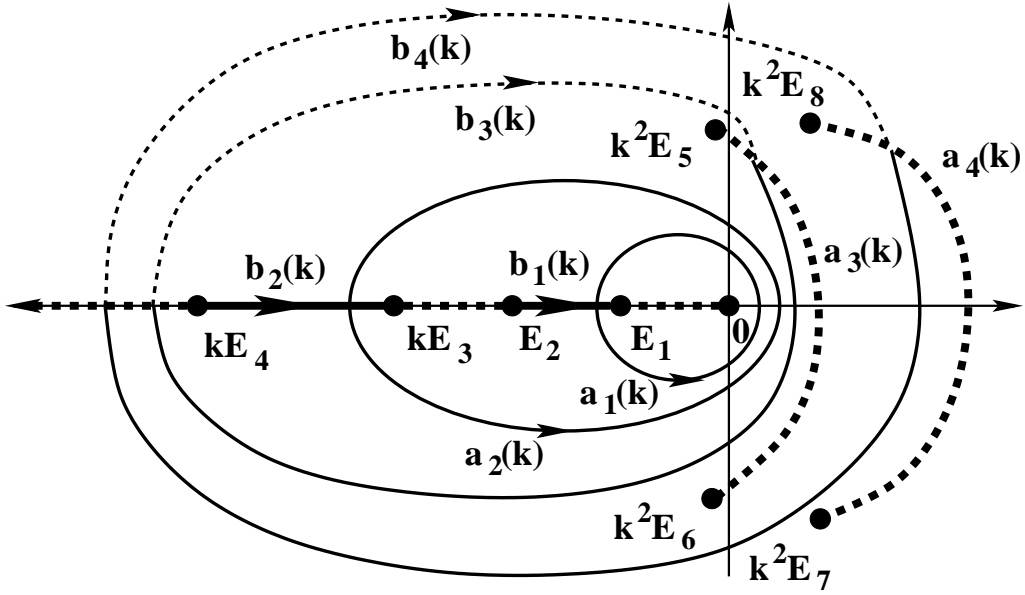}
  \caption{  Basic cycles and cuts on $\Gamma(k)$ for $g=4, m=2$}
\end{figure}

Let $B(k)$ denote the Riemann matrix of $\Gamma(k)$ with respect to this choice of basic cycles. 
We will be concerned with the limit of the pair $(\Gamma(k), B(k))$ as $k \to \infty$.
To this end, we decompose $\Gamma(k)$ into $m+1$ open sets $R_j$, and we also introduce certain 
rescaled versions of the coordinate function $\mu$ on $\Gamma(k)$:
\beq 
\label{5.4}
\begin{split} R_1 &= \{ (\lambda, \mu) \in \Gamma(k) \,|\; 0 \leq | \lambda_1 | < k^{2/3} \}\\
R_j &= \{ (\lambda, \mu) \in \Gamma(k) \,|\; k^{-1/2} < | \lambda_j | < k^{2/3} \}, \quad \mbox{for} \;\, 2 \leq j \leq m \\
R_{m+1} &= \{ (\lambda, \mu) \in \Gamma(k) \,|\; k^{-1/2} < | \lambda_{m+1} | \leq \infty \} \end{split} 
\eeq
\beq 
\label{5.5}
\mu_j = 
\begin{Bmatrix} 
\mu/ \Bigl( \sqrt{k}^{3(j-1)}\, k^{(j+\cdots+m-1)} \, k^{m(g-m)}\, \lambda^{j-1}\, 
(\prod_{l=2j+1}^{2g} E_l)^{0.5} \, \Bigr) 
 \qquad \mathrm{ if } \quad 1 \leq j \leq m \\ 
\mu/ \Bigl( \sqrt{k}\, k^{m(g-m)}\, \lambda^m \, \Bigr)  \qquad \qquad \mathrm{ if } \quad j = m+1
 \end{Bmatrix} 
\eeq
where all square-roots occurring in (\ref{5.5}) are positive real numbers. 
We define  maps $\phi(k) = (\phi_1(k), \phi_2(k), \cdots, 
\phi_{m+1}(k))$  from $\Gamma(k)$  to $(\bP^2)^{m+1}$  given by: 
\beq
\label{5.6}
\phi_j(k) (\lambda, \mu) = (\lambda_j: \mu_j :1) \eeq
We consider the restriction of the maps  $\phi_j(k)$ to the regions $R_i$. Using  (\ref{5.4}) 
and (\ref{5.5}) in (\ref{5.6}) we obtain:
\beq 
\label{5.7}
 \phi_j(k)_{|R_i} (\lambda, \mu) = \begin{Bmatrix} 
(o(1/k):o(1/k):1) \qquad \qquad \quad \mbox{if} \quad i < j \\
(o(1/k):1:o(1/k)) \qquad \qquad \quad \mbox{if} \quad i > j \\
(\lambda_j:\sqrt{P_j(\lambda_j)}\,(1+o(1/k)):1) \quad \mbox{if} \quad i = j  \end{Bmatrix}  
\quad \mbox{as} \quad k \to \infty
\eeq
where, for $(\lambda, \mu) \in \Gamma(k)^+$ the expression 
$(\lambda_j, \sqrt{P_j(\lambda_j)}) \in \mathcal C_j^+$. This can be seen by noting that, 
for $j \leq m$,  ${ \rm sgn}(-\mu) = {\rm sgn}(\lambda^{j-1}) = (-1)^{j-1}$ 
on  $\Gamma(k)^+$ over $( E_{2j}, E_{2j-1})$, and therefore (\ref{5.5}) 
implies that ${\rm sgn}(\mu_j) = -1$ on $\Gamma(k)^+$ over $( E_{2j}, E_{2j-1})$. 
Also ${\rm sgn}(\sqrt{P_j(x_j)}) = -1$ on $\mathcal C_j^+$ 
over $x_j \in ( E_{2j}, E_{2j-1})$. Similarly if $j=m+1$, then both quantities 
$\mu, \lambda^m$ and therefore $\mu_{m+1}$ are positive on $ 
\Gamma(k)^+$ over the positive real 
axis, and $\sqrt{P_{m+1}(x_{m+1})}$ is also positive on $\mathcal C_{m+1}^+$ over the  positive real axis.

We next consider embeddings $ \phi_i: \mathcal C_i \to (\bP^2)^{m+1}$ given by:
\[ \phi_i: (x_i, y_i) \mapsto \{(0:1:0)\}^{i-1} \times (x_j:y_j:1) \times \{(0:0:1)\}^{m+1-i} \]
($\phi_{m+1}$ is  singular  at $\infty \in \mathcal C_{m+1}$ if $g-m>1$, but this is inessential for us). 
Let $\mathcal C \subset(\bP^2)^{m+1}$ 
be defined as $\cup_{i=1}^{m+1} \phi_i(\mathcal C_i)$. Clearly $\mathcal C$ is a nodal hyperelliptic 
curve of genus $g$ having  $m$ double points
$\{(0:1:0)\}^{i} \times \{(0:0:1)\}^{m+1-i}$ for $1 \leq i \leq m$. It is also clear from (\ref{5.7}) that:
  \begin{gather*} \lim_{k \to \infty} \phi(k) ( \Gamma(k) ) \, = \,  \mathcal C  \subset  (\bP^2)^{m+1} \\
\mbox{with} \quad  \lim_{k \to \infty} \phi(k) (R_i) = \phi_i(\mathcal C_i) \subset \mathcal C \\
 \mbox{and} \quad \lim_{k \to \infty} \phi(k)(R_i \cap R_{i+1}) = \{(0:1:0)\}^{i} \times \{(0:0:1)\}^{m+1-i} \quad \mbox{(the $m$ double points)} 
\end{gather*}  

Let $a_j, b_j$ denote the basic cycles on $\phi_j(\mathcal C_j) \subset \mathcal C$ 
( or $\phi_{m+1}(\mathcal C_{m+1}) \subset \mathcal C$ if $j>m$). 
Let $\omega_j$ be the holomorphic differential on the elliptic curve $\mathcal C_j$ (for $j \leq m$) 
satisfying $\int_{a_j} \omega_j = 1$, and define $\tau_j = \int_{b_j} \omega_j$. We define $B_1$ 
to be the diagonal matrix diag$(\tau_1, \cdots, \tau_m)$. Similarly, let $\omega_{m+j}$ for $1 
\leq j \leq g-m$ be holomorphic differentials on the hyperelliptic curve 
$\mathcal C_{m+1}$ satisfying $\int_{a_{m+j}} \omega_{m+i} = \delta_{i j}$ and 
let $(B_2)_{ij} = \int_{b_{m+j}} \omega_{m+i}$  for $1 \leq i,j \leq g-m$. Define $B_{\infty}$ 
to be the block-diagonal matrix $B_{\infty} = $ diag$(B_1, B_2)$. It is the Riemann matrix of 
$\mathcal C$ with respect to the basic cycles $\{a_1, b_1, \cdots, a_g, b_g\}$. 
Using (\ref{5.7}) again, it follows that the component of $ \lim_{k \to \infty} \phi(k) (a_j(k))$ in 
$\phi_i(\mathcal C_i)$ is homologous to $a_i$ if $i=j$ and homologous to zero if $i \neq j$. 
Similarly the component of $ \lim_{k \to \infty} 
\phi(k) (b_j(k))$  in $\phi_i(\mathcal C_i)$ is homologous to $b_i$ if $i=j$ and homologous to zero if $i \neq j$.
\begin{gather*} \lim_{k \to \infty} \phi(k)_* (a_j(k)) = a_j, \quad \mbox{and} 
\quad  \lim_{k \to \infty} \phi(k)_* (b_j(k)) = b_j \quad \\
\mbox{therefore} \quad \lim_{k \to \infty} B(k) = B_{\infty} \end{gather*}

\section{Calculation of Topological Charge}
\label{calc_top_charge}
We reduce the calculation of the basic charges $n_j$ in the general case to the two special cases 
computed in Section~\ref{limiting_charges}. As proved in Lemma 1, the real tori $T_s$ do not intersect the divisor 
$\Theta \cup (\Theta+A(0))$. Therefore the integral term in (\ref{4.3}) stays nonsingular through 
the deformation $B(k)$. It is also constant because it is continuous in the deformation parameter 
$k$ and it is integer valued. In other words the basic charges $n_j$ may be calculated from (\ref{4.3}) 
using $B_{\infty}$ in place of $B$. Also, it follows from the definition 
(\ref{2.2}) that $\theta((z_1, z_2)^t \,|\, {\rm diag}(B_1, B_2)) = \theta(z_1\,|\,B_1)\, 
\theta(z_2\,|\,B_2)$. Using this and the computation (\ref{3.11}) for 
$K$ in the formula (\ref{4.3}), we obtain
\beq
\label{6.1}
n_j = \begin{Bmatrix} s_j +  \frac{2}{2 \pi i} \int_{T=0}^{T=1} d \log \left(  \frac{\theta( -T  - (1+\tau_j)/2  - s_j \tau_j /4 -s_j \tau_j/2)}
{ \theta(  -T -(1+\tau_j)/2  -  s_j\tau_j/4) }  \right)   = s_j \quad \mbox{ if $j \leq m $ is odd} \\
 -s_j +  \frac{2}{2 \pi i} \int_{T=0}^{T=1} d \log \left(  \frac{\theta( -T  - 1/2  - s_j \tau_j /4 +s_j \tau_j/2)}
{ \theta(  -T -1/2  -  s_j\tau_j/4) }  \right)   = -s_j \quad \mbox{ if $j \leq m$ is even} \\
0 \qquad  \mbox{ if $j > m$ } 
\end{Bmatrix} \eeq

Thus we have proved:
\begin{theorem} The topological charge density $\bar n$ for the real finite-gap solution $u(x,t)$ 
for the spectral data $(\Gamma, D)$ with $D$ corresponding to the real torus $T_s$ is given by:
\[ \bar{n}= \sum_{j=1}^g (U_j-V_j) \, n_j /4  \]
where the basic charges $n_j$ are:
\beq 
\label{6.2}
n_j =
\begin{Bmatrix} (-1)^{j-1} s_j \qquad \mathrm{ if } \quad 1 \leq j \leq m \\ 
0 \qquad \qquad \mathrm{ if } \quad j > m \end{Bmatrix} 
\eeq
\end{theorem}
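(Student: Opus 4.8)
The plan is to reduce the theorem to the computation of the basic charges $n_j$, since the formula $\bar n = \sum_{j=1}^g (U_j - V_j)\, n_j/4$ is already available from (\ref{4.2}). Thus the entire content lies in establishing (\ref{6.2}). My strategy is to exploit the fact that each $n_j$, as written in (\ref{4.3}), splits into an explicit leading term $-\tilde\epsilon_j$ plus an integral term which, being $\tfrac{1}{\pi i}$ times the winding number of a ratio of $\theta$-values around the $j$-th basic cycle, is an \emph{integer}. I would then deform the Riemann matrix from $B = B(1)$ along the multiscale family $B(k)$ of Section 5 to its limit $B_\infty = \mathrm{diag}(B_1, B_2)$, and evaluate the charges there.

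The key step is deformation invariance. By Lemma 1(i) the real torus $T_s$, together with its translate $T_s + A(0)$, avoids the divisor $\Theta \cup (\Theta + A(0))$; I would first note that this holds not merely for $\Gamma$ but for every curve $\Gamma(k)$ in the family (each satisfies the same reality constraints (\ref{3.1}) and carries the same topological type $s$), so that along the whole deformation the integrand in (\ref{4.3}) neither vanishes nor blows up. The winding-number integral is then continuous in $k$, and being integer-valued it must be constant. Combined with the convergence $B(k) \to B_\infty$ established in Section 5, this licenses evaluating $n_j$ using $B_\infty$ in place of $B$.

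At the limit the problem decouples. Since $B_\infty = \mathrm{diag}(B_1, B_2)$ with $B_1 = \mathrm{diag}(\tau_1, \ldots, \tau_m)$, the multiplicativity $\theta((z_1, z_2)^t \mid \mathrm{diag}(B_1, B_2)) = \theta(z_1 \mid B_1)\,\theta(z_2 \mid B_2)$, together with the explicit vector of Riemann constants (\ref{3.11}), collapses the $j$-th integral to a single elliptic $\theta$-integral when $j \le m$, and to a contribution from the hyperelliptic component $\mathcal C_{m+1}$ when $j > m$. For $j \le m$ the argument of the $j$-th factor carries the shift $-\tfrac{j\tau_j}{2}$ coming from $\nu_1 = (1,2,\ldots,m)$ in (\ref{3.11}); modulo the period lattice this shift equals $0$ when $j$ is even and $-\tfrac{\tau_j}{2}$ when $j$ is odd, so the integral takes exactly the form (\ref{4.4}) for odd $j$ and (\ref{4.5}) for even $j$. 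In both cases the region $R$ contains no zero of the elliptic $\theta$-function, so the integral term drops out and only the leading term $-\tilde\epsilon_j = (-1)^{j-1} s_j$ survives. For $j > m$ all relevant components of $\tilde\epsilon$ vanish, reducing to the $m = 0$ case of Section 4 on $\mathcal C_{m+1}$ and giving $n_j = 0$. This yields (\ref{6.2}), and substitution into (\ref{4.2}) completes the proof.

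I expect the main obstacle to be the deformation-invariance step: establishing that the block-diagonal limit $B(k) \to B_\infty$ holds with the correct identification of cycles (the content of the degeneration of $\Gamma(k)$ to the nodal curve $\mathcal C$ via the rescaled maps $\phi_j(k)$ of Section 5), and that Lemma 1(i) survives for every member of the family so that the winding-number integral remains non-singular throughout. Once non-singularity is secured, the integer-valuedness of the winding number does the rest; the factorization and the reduction to the elliptic computations of Section 4 are then essentially bookkeeping in the arguments of the $\theta$-factors.
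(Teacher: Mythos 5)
Your proposal is correct and follows essentially the same route as the paper: deformation invariance of the integer-valued winding integral in (\ref{4.3}) via Lemma 1(i) along the multiscale family $B(k)\to B_\infty$, then factorization of $\theta$ over the block-diagonal $B_\infty$ together with (\ref{3.11}), reducing odd and even $j\le m$ to the elliptic computations (\ref{4.4}) and (\ref{4.5}) and $j>m$ to the $m=0$ case. Your parity bookkeeping for the shift $-j\tau_j/2$ and the resulting values $n_j=(-1)^{j-1}s_j$ match the paper's (\ref{6.1}) exactly.
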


\noindent {\it Remarks}
\begin{enumerate} 
\item In \cite{GN2}, the admissible divisors were characterized by certain symbols 
$\{s_1', \cdots, s_m'\} \in \{ \pm 1\}^m$ defined as follows. Given an admissible divisor 
$D=\{ (\lambda_i, \mu_i) \,|\, 1 \leq j \leq g \}$ let $P(\lambda)$ be the unique polynomial of
degree $g-1$ interpolating the $g$ points $(\lambda_i, \mu_i/\lambda_i)$. Then $P(\lambda)$ is real 
and $s_j'$ is defined to be the sign of $P(\lambda)$ over $[E_{2j}, E_{2j-1}]$. 
It was shown in \cite{GN} that the charges $n_j$ are equal to $(-1)^{j-1} s_j'$ for $j \leq m$ and 
$n_j=$ for $j > m$. Comparing with formula (\ref{6.2}), it follows that the symbols 
$s_j'$ and $s_j$ coincide.
\item The multiscale limit of the spectral curve constructed above was used only for a topological argument. 
The sine-Gordon solutions $u(x,t,k)$ associated with the spectral curve $\Gamma(k)$ 
(and admissible divisors $D(k)$) depend on the vectors $U(k)$ and $V(k)$ mentioned in the
Section~\ref{real_tori} . As $k \to \infty$, some component of $U(k)$ will diverge 
to $\infty$. Thus there is no limiting solution. However asymptotic expansion
in the parameter $k$ of $u(x,t,k)$ involving elliptic (genus $1$) solutions can be written. 
This will be investigated in a future work.
\end{enumerate}

\end{document}